\newcommand{\norm}[1]{\ensuremath{\left|\left|#1\right|\right|}}
\begin{document}
\title{Optimal quantum state discrimination via nested binary measurements
} 
\author{Matteo Rosati}
\affiliation{NEST, Scuola Normale Superiore and Istituto Nanoscienze-CNR, I-56127 Pisa,
Italy.}
\author{Giacomo De Palma} 
\affiliation{QMATH, Department of Mathematical Sciences, University of Copenhagen,
Universitetsparken 5, 2100 Copenhagen, Denmark}
\author{Andrea Mari} 
\affiliation{NEST, Scuola Normale Superiore and Istituto Nanoscienze-CNR, I-56127 Pisa,
Italy.}
\author{Vittorio Giovannetti} 
\affiliation{NEST, Scuola Normale Superiore and Istituto Nanoscienze-CNR, I-56127 Pisa,
Italy.}

\begin{abstract} A method to compute the optimal success probability of discrimination of $N$ arbitrary quantum states is presented, based on the decomposition of any $N$-outcome measurement into sequences of nested two-outcome ones. In this way the optimization of the measurement operators can be carried out in successive steps, optimizing first the binary measurements at the deepest nesting level and then moving on to those at higher levels. We obtain an analytical expression for the maximum success probability after the first optimization step and examine its form for the specific case of $N=3,4$ states of a qubit. In this case, at variance with previous proposals, we are able to provide a compact expression for the success probability of any set of states, whose numerical optimization is straightforward; the results thus obtained highlight some lesser-known features of the discrimination problem.  
\end{abstract}
\maketitle

\section{Introduction}
The discrimination of quantum states~\cite{QSDiscriminationRev} is one of the fundamental problems in Quantum Information and a basic task for several applications in communication~\cite{HolBound,hauswoot,seq1,polarWildeGuha,NOSTROHol}, cryptography~\cite{BB84,CryptoRev,MinMaxEnt}, fundamental questions~\cite{NoCloning,AsyCloning,NoSignal,DimWitness}, measurement and control~\cite{MilburnBOOK,QIllumination} and algorithms~\cite{HidSubgroup}. Triggered by the observation that non-orthogonal quantum states cannot be perfectly discriminated, this subject has stimulated much work, both from a theoretical and practical point of view: the seminal works of Helstrom~\cite{Hel}, Holevo~\cite{HolOp} and Yuen \textit{et al.}~\cite{YKL} formalized the problem, obtaining a set of conditions for the optimal measurement operators, which in turn provide the optimal success probability, then solved it for sets of states symmetric under a unitary transformation; more recently, acknowledging that a general analytical solution is hard to find, research focused on finding a solution for sets with more general symmetries~\cite{geoUni,Chiribella,guhaGroup}, computing explicitly the optimal measurements for the most interesting sets of states~\cite{multiHel1,multiHel2,bae,threeQubits,qubits} and studying the implementation of such measurements with available technology (see for example ~\cite{Ken,OpGauss,OpKen,NOSTRODisc,Marq,ImplProj,AdaptiveLOCC,OpMeasCoh,Dol,QKD,SeqNull,DolExp,becerra1} for the case of two optical coherent states, the most relevant for optical communication). Also, the problem of discrimination has been identified as a convex optimization one, arguing that it can be solved efficiently with numerical optimization methods~\cite{SDP}.
\\ In this article we attempt to solve the optimal discrimination of $N$ quantum states from a different perspective, by providing a structured expression for the $N$-outcome Positive Operator Valued Measure (POVM) used to discriminate the states. Indeed it can be shown~\cite{prec,precMap} that any POVM comprising $N$ elements is equivalent to a collection of binary POVM's, i.e., comprising two elements, as the one employed in Ref.~\cite{NOSTROHol}: depending on the binary outcome of the first measurement, a second one is applied; its binary outcome in turn affects the choice of the third binary measurement and so on. In this way a sequence of nested binary POVM's can be constructed, where the POVM applied at a given level depends on the string of binary outcomes of the previous ones. 
This result was already obtained in Ref.~\cite{prec}. When applied to state discrimination, it acquires a more operational meaning: each binary POVM can be seen as discriminating between two subsets of the initial set of states, identified by previous outcomes. Hence the sequence of measurements induces a sequence of discrimination probabilities, so that, if the optimization problem is solved independently for any set of a fixed number of states, the result can be employed in the optimization problem for larger sets of states. \\
In the second part of the article, employing this decomposition and the two-state optimal probability~\cite{Hel}, we obtain an expression for the success probability of discrimination of any $N=3,4$ states, depending on a single measurement operator, and solve the problem analytically for specific sets of states. Then we restrict our attention to qubit states and obtain a compact expression which can be easily optimized numerically case by case, at variance with less compact results for $N=3$ presented in previous works based on Bloch-space geometry~\cite{qubits,bae,threeQubits}. We recover the results of those works and highlight in particular some interesting lesser-known implications of Ref.~\cite{bae}.\\
The article is structured as follows: in Sec.~\ref{Deco} we describe the decomposition in terms of nested binary POVM's and  provide a proof of its validity, similar to that of Ref.~\cite{prec}; in Sec.~\ref{StateDisc} we apply it to state discrimination and obtain an explicit expression for the case of $N=3,4$ arbitrary states, then discuss its optimization in some specific cases; in Sec.~\ref{Qubits} we treat the case of $N=3,4$ qubit states, computing a compact expression which can be optimized numerically and highlighting some results obtained in this way. Eventually in Sec.~\ref{Conc} we draw some conclusions. Detailed computations of the quantities appearing in the article are provided in the Appendices.

\section{General decomposition of a $N$-outcome measurement into nested binary ones}\label{Deco}
In this Section we prove that any quantum measurement with an arbitrary number of outcomes can be decomposed into a sequence of nested measurements with binary outcomes, where the previous results determine the choice of successive measurements. We stress that the same result was obtained in Ref.~\cite{prec}. At variance with the latter, our proof does not make use of the spectral decomposition of the initial measurement operators; we present it here in a form adapted to the main purpose of the article.  Let us suppose we want to perform a quantum measurement with $N$ possible outcomes: it can be expressed in general as a POVM $\mathcal{M}^{(N)}$ of elements $E_{j}$, one for each outcome $j=0,\cdots,N-1$, satisfying the positivity and completeness conditions, i.e., respectively $E_{j}\geq0$ and $\sum_{j=0}^{N-1}E_{j}=\mathbf{1}$, where $\mathbf{1}$ is the identity operator on the Hilbert space of the system to be measured. This expression can be interpreted as a one-shot measurement with several possible results and its practical realization may often be very hard. 
On the other hand we could restrict to performing only measurements with two outcomes, as described by \textit{binary} POVM's: $\mathcal{B}\equiv\mathcal{M}^{(2)}=\{B_{0},B_{1}\}$. This may be useful when limited technological capabilities or specific theoretical requirements constrain the  number of allowed outcomes and the complexity of our measurement. 
It is then natural to ask whether this smaller set of resources is sufficient to describe a general quantum measurement. We answer positively by showing that the more general $N$-outcome formalism can be broken up into several binary steps and interpreted as a sequence of nested POVM's with two outcomes, trading a \textit{one-shot}, \textit{multiple-outcome} measurement for a \textit{multiple-step}, \textit{yes-no} measurement.\\
The nested POVM can be expressed in terms of \textit{conditional binary} POVM's $\mathcal{B}_{\vec{k}}=\{B_{\vec{k},0},B_{\vec{k},1}\}$, each complete by itself, to be applied only if a specific string $\vec{k}$ of previous results is obtained. 
 For example for $N=4$ the nested POVM can be realized in two steps and written compactly as the collection of three binary POVM's: 
  $\mathcal{N}^{(4)}=\left\{\mathcal{B}^{(2)}_{0},\mathcal{B}^{(2)}_{1}\right\}\circ\left\{\mathcal{B}^{(1)}\right\}$, properly composed as follows and shown in Fig.~\ref{schema}. The measurement starts by applying the first-step binary POVM $\mathcal{B}^{(1)}=\left\{B^{(1)}_{k_{1}}\right\}_{k_{1}=0,1}$ then, depending on its outcome $k_{1}$, it selects $\mathcal{B}^{(2)}_{k_{1}}=\left\{B^{(2)}_{k_{1},k_{2}}\right\}_{k_{2}=0,1}$ among the two POVM's available in the second-step collection $\left\{\mathcal{B}^{(2)}_{0},\mathcal{B}^{(2)}_{1}\right\}$. Eventually the chosen second-step POVM is applied, receiving an outcome $k_{2}$. The total outcome is a string of two bits, i.e., $k_{1},k_{2}$, whose value identifies one of  four possible outcomes, as desired. Suppose now to apply this measurement on a state $\rho$ of some physical system: if the first-step outcome is $k_{1}=0$, the resulting unnormalized evolved state is $\sqrt{B^{(1)}_{0}}\rho\sqrt{B^{(1)}_{0}}$; if then the second-step outcome is $k_{2}=0$, the final unnormalized state of the system is $\sqrt{B^{(2)}_{0,0}}\sqrt{B^{(1)}_{0}}\rho\sqrt{B^{(1)}_{0}}\sqrt{B^{(2)}_{0,0}}$. This means that the nested POVM has a more explicit representation as 
 \begin{eqnarray} \mathcal{N}^{(4)}=\left\{F_{k_{1},k_{2}}=\left|\sqrt{B^{(2)}_{k_{1},k_{2}}}\sqrt{B^{(1)}_{k_{1}}}\right|^{2}\right\}_{k_{1},k_{2}=0,1},\end{eqnarray} 
 where $|X|^{2}=X^{\dag}X$ is the square of the absolute value of an operator $X$. 
 \begin{figure}[t!]
 \includegraphics[scale=.24]{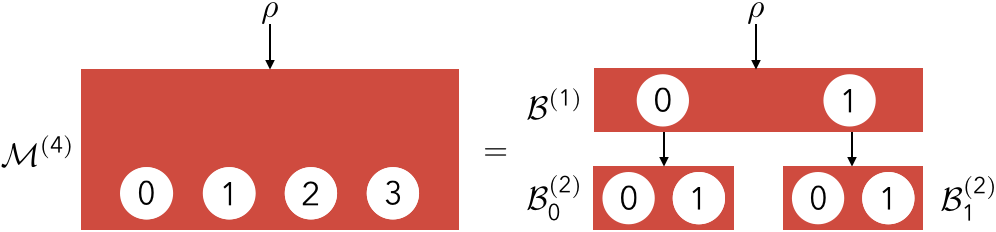}\caption{Schematic depiction of the nested decomposition for $N=4$, explicitly discussed in the text. Any four-outcome measurement $\mathcal{M}^{(4)}$ acting on a state $\rho$ is equivalent to the concatenation of two-outcome measurements: the first-step one $\mathcal{B}^{(1)}$, with result $k_{1}=0,1$, and the second-step ones $\mathcal{B}^{(2)}_{k_{1}}$, which are mutually exclusive and applied only if the corresponding first outcome $k_{1}$ was obtained. }
 \label{schema}
 \end{figure}
In the general case, let us indicate a sequence of $b-a+1$ bits as 
\begin{equation}\label{notation}
k_{(a,b)}=\begin{cases}
k_{a},k_{a+1},\cdots,k_{b},& b\geq a\\
\emptyset,&b<a, 
\end{cases}
\end{equation} 
and define as $\mathcal{B}^{(u)}_{k_{(1,u-1)}}=\left\{B^{(u)}_{k_{(1,u-1)},0},B^{(u)}_{k_{(1,u-1)},1}\right\}$ the binary POVM to be performed at the $u$-th step if the previous $u-1$ measurements had a sequence of results $k_{(1,u-1)}$. Then we can define a nested POVM $\mathcal{N}^{(N)}$ of order $N=2^{u_{F}}$ as 
\begin{eqnarray}
\mathcal{N}^{(N)}&=&\left\{\mathcal{B}^{(u_{F})}_{k_{(1,u_{F}-1)}}\right\}_{k_{1},\cdots,k_{u_{F}-1}=0,1}\circ\cdots\circ\left\{\mathcal{B}^{(1)}\right\}\nonumber\\
&=&\left\{F_{k_{(1,u_{F})}}=\left|\sqrt{B^{(u_{F})}_{k_{(1,u_{F})}}} \cdots\sqrt{B^{(1)}_{k_{1}}}\right|^{2}\right\},\label{nested}
\end{eqnarray}
i.e., the collection of $2^{u_{F}}-1$ binary POVM's $\mathcal{B}^{(u)}_{k_{(1,u-1)}}$, for all previous outcomes $k_{(1,u-1)}$ at a given step $u$ and all steps $u=1,\cdots,u_{F}$. We can certify that $\mathcal{N}^{(N)}$ so constructed actually is a POVM by checking positivity and completeness of its elements $F_{k_{(1,u_{F})}}$: the former requirement is trivial, while the latter follows from the fact that each binary POVM is complete, as shown in Appendix~\ref{appA}. 

 In light of the previous discussion we can now state the main theorem:
 \newtheorem{theorem}{Theorem}
\begin{theorem}\label{decomposition}
Any $N$-outcome POVM $\mathcal{M}^{(N)}=\{E_{j}\}_{j=0,\cdots,N-1}$ is equivalent to a nested POVM $\mathcal{N}^{(\tilde{N})}$, $\tilde{N}=2^{u_{F}}$, as in Eq.~\eqref{nested}, composed exclusively of binary POVM's $\mathcal{B}^{(u)}_{k_{(1,u-1)}}$, with a total number of steps $u_{F}$ equal to:
\begin{enumerate}
\item $\log_{2}N$, if $N$ is a power of $2$;
\item $\lceil\log_{2}N\rceil$ otherwise, where $\lceil\cdot\rceil$ is the ceiling function, equal to the smallest integer following the argument.
\end{enumerate}
 \end{theorem}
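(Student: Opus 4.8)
The plan is to reduce the statement to the case $N=2^{u_F}$ and prove that case by induction on the number of steps $u_F$. If $N$ is not a power of $2$, set $\tilde N=2^{\lceil\log_2 N\rceil}$ and pad the POVM with $\tilde N-N$ null operators, $E_{N}=\dots=E_{\tilde N-1}=0$; the result is still a legitimate $\tilde N$-outcome POVM, equivalent to the original one, so it suffices to treat powers of $2$. That $u_F$ cannot be taken any smaller is immediate: a nested scheme of $u$ binary steps produces exactly $2^{u}$ outcome strings, hence one needs $2^{u_F}\geq N$, i.e.\ $u_F\geq\lceil\log_2 N\rceil$, with equality when $N$ is a power of $2$.

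For the inductive core the base case $u_F=1$ is trivial, a binary POVM being already of the required form. Assume that every $2^{u_F-1}$-outcome POVM admits a nested decomposition into $u_F-1$ steps, and let $\mathcal{M}^{(2^{u_F})}=\{E_j\}_{j=0}^{2^{u_F}-1}$ be given. Split the outcome set into its lower and upper halves and define the first-step binary POVM by
\begin{equation}
B^{(1)}_{0}=\sum_{j=0}^{2^{u_F-1}-1}E_j,\qquad B^{(1)}_{1}=\sum_{j=2^{u_F-1}}^{2^{u_F}-1}E_j,
\end{equation}
which is positive and complete by construction. It then remains to exhibit, for each first outcome $k_1$, a $2^{u_F-1}$-outcome POVM $\{\tilde E^{(k_1)}_m\}_m$ on the corresponding post-measurement branch such that $\sqrt{B^{(1)}_{k_1}}\,\tilde E^{(k_1)}_m\,\sqrt{B^{(1)}_{k_1}}$ equals the original element $E_j$ that labels the $m$-th outcome of the $k_1$-th half; applying the induction hypothesis to each of these two POVM's, relabelling its steps as $2,\dots,u_F$ and prepending $k_1$ to all of its conditioning strings, and finally composing with $\mathcal{B}^{(1)}$, produces the claimed nested POVM with $u_F$ steps, whose global completeness is exactly the one already verified in Appendix~\ref{appA}.

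The construction of the conditional POVM's is the only delicate point, since $B^{(1)}_{k_1}$ need not be invertible. Each $E_j$ entering the sum that defines $B^{(1)}_{k_1}$ obeys $0\leq E_j\leq B^{(1)}_{k_1}$, hence $\mathrm{supp}\,E_j\subseteq\mathrm{supp}\,B^{(1)}_{k_1}$, so that
\begin{equation}
\tilde E^{(k_1)}_m:=\big(B^{(1)}_{k_1}\big)^{-1/2}\,E_{j}\,\big(B^{(1)}_{k_1}\big)^{-1/2}
\end{equation}
is well defined once the inverse square root is read as a generalized inverse, i.e.\ the ordinary inverse on $\mathrm{supp}\,B^{(1)}_{k_1}$ and zero on its orthogonal complement; then $\tilde E^{(k_1)}_m\geq0$ and $\sqrt{B^{(1)}_{k_1}}\,\tilde E^{(k_1)}_m\,\sqrt{B^{(1)}_{k_1}}=E_{j}$. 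Summing over $m$ yields the projector $\Pi_{k_1}$ onto $\mathrm{supp}\,B^{(1)}_{k_1}$ rather than $\mathbf 1$; completeness is restored by adding $\mathbf 1-\Pi_{k_1}$ to any single $\tilde E^{(k_1)}_m$, which leaves the reconstruction untouched since $\sqrt{B^{(1)}_{k_1}}$ annihilates $(\mathrm{supp}\,B^{(1)}_{k_1})^{\perp}$. Hence each $\{\tilde E^{(k_1)}_m\}_m$ is a bona fide $2^{u_F-1}$-outcome POVM, the induction hypothesis applies, and the composition closes the argument.

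The main obstacle is precisely this non-invertibility, compounded with the bookkeeping of the nested labels: one must verify that replacing genuine inverses by generalized inverses still gives operators that are positive, that sum — after the $\mathbf 1-\Pi_{k_1}$ correction — to a complete binary-structured collection, and that satisfy the sandwiched identity with $\sqrt{B^{(1)}_{k_1}}$ exactly rather than only on a subspace. All three follow from the single support inclusion $\mathrm{supp}\,E_j\subseteq\mathrm{supp}\,B^{(1)}_{k_1}$, after which the remaining manipulations are routine and the equivalence with $\mathcal{M}^{(N)}$ (including the matching of the padded null outcomes) is immediate.
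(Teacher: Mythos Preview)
Your proof is correct and follows essentially the same construction as the paper: form the first-step binary POVM by summing the two halves of the original elements, build the conditional sub-POVMs via pseudo-inverse square roots on the support of $B^{(1)}_{k_1}$, and restore completeness by adjoining the kernel projector. The only difference is organizational---you argue by induction on $u_F$, whereas the paper writes out the closed-form expression for every $B^{(u)}_{k_{(1,u)}}$ at once (Eq.~\eqref{elements}) and then checks directly that the nested composition reproduces the $E_j$; the underlying ideas and the handling of non-invertibility are identical.
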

 \begin{proof}
 Consider the first case above, i.e., $N=2^{u_{F}}\equiv\tilde{N}$. 
We start by providing a binary representation of the labels $j$ of the initial POVM $\mathcal{M}^{(N)}$, i.e., we define $E_{k_{(1,u)}}\equiv E_{j^{(k)}}$, with $j^{(k)}=\sum_{u=1}^{u_{F}}2^{u-1}k_{u}$. In order to prove the theorem we have to show that by combining the elements of the initial $N$-outcome POVM $\mathcal{M}^{(N)}$ one can always define a set of binary POVM's $\mathcal{B}^{(u)}_{k_{(1,u-1)}}$, for all $k_{1},\cdots,k_{u-1}=0,1$ and $u=1,\cdots,u_{F}$, such that: i) their nested composition is a POVM of the form $\mathcal{N}^{(N)}$, Eq.~\eqref{nested}; ii) the elements of the latter are equal to the elements of $\mathcal{M}^{(N)}$.  \\
First of all we construct the binary elements at each step $u$, by taking the sum of all the elements $E_{k_{(1,u)},k_{(u+1,u_{F})}}$ with a fixed value of the first $u$ bits, then renormalizing it by all previous binary elements, as in a Square Root Measurement~\cite{schumawest,hauswoot}. For example define the elements of the first-step POVM $\mathcal{B}^{(1)}$ as
 \begin{eqnarray}\label{first}
 B^{(1)}_{k_{1}}=\sum_{k_{(2,u_{F})}}E_{k_{1},k_{(2,u_{F})}},
 \end{eqnarray}
 for each value of the outcome $k_{1}=0,1$. Being a sum of positive operators, the elements so defined are themselves positive; moreover their sum equals the sum of all the elements of $\mathcal{M}^{(N)}$, implying that they are complete. At the second step define the elements of the two possible POVM's $\mathcal{B}^{(2)}_{k_{1}}$ as
  \begin{eqnarray}\label{second}
 B^{(2)}_{k_{(1,2)}}=\sqrt{B^{(1)}_{k_{1}}}^{-1}\sum_{k_{(3,u_{F})}}E_{k_{(1,2)},k_{(3,u_{F})}}\sqrt{B^{(1)}_{k_{1}}}^{-1},
 \end{eqnarray}
 where the inverse of an operator is to be computed only on its support, while it is equal to $0$ on its kernel, i.e., its pseudo-inverse. 
 Also in this case the defined elements  are positive by construction, but they are not complete. Indeed it is easy to show, employing the definition \eqref{first}, that $B^{(2)}_{k_{1},0}+B^{(2)}_{k_{1},1}=\mathbf{1}_{k_{1}}$. Here $\mathbf{1}_{k_{1}}$ is the projector on the support of the previous outcome operator, $B^{(1)}_{k_{1}}$, which may have a non-trivial kernel, so that in general it holds $\mathbf{1}_{k_{1}}\leq\mathbf{1}$. This problem may be overcome easily by redefining the POVM elements as $\tilde{B}^{(2)}_{k_{(1,2)}}=B^{(2)}_{k_{(1,2)}}\oplus\left(\mathbf{1}-\mathbf{1}_{k_{1}}\right)/2$, i.e., trivially expanding the support of those already defined in \eqref{second}, so that $\tilde{B}^{(2)}_{k_{1},0}+\tilde{B}^{(2)}_{k_{1},1}=\mathbf{1}_{k_{1}}\oplus\left(\mathbf{1}-\mathbf{1}_{k_{1}}\right)=\mathbf{1}$. This operation is trivial because, in the construction \eqref{nested} of the nested POVM, the operators $B^{(2)}_{k_{(1,2)}}$ always act after the operator $B^{(1)}_{k_{1}}$, so that the value of the former outside the support of the latter is completely irrelevant. In other words, completeness of the binary POVM's is not necessary for the definition of $\mathcal{N}^{(N)}$ as a proper POVM; it is sufficient to ask for \textit{weak completeness}, i.e., that $\mathcal{B}^{(u)}_{k_{(1,u-1)}}$ is complete on the support of the operator preceding it in the decomposition, $B^{(u-1)}_{k_{(1,u-1)}}$. \\
Generalizing the previous discussion, at the $u$-th step we can define the elements of the $2^{u-1}$ possible POVM's $\mathcal{B}^{(u)}_{k_{(1,u-1)}}$ as 
   \begin{eqnarray}\label{elements}
 &&B^{(u)}_{k_{(1,u)}}=\sqrt{B^{(u-1)}_{k_{(1,u-1)}}}^{-1}\cdots\sqrt{B^{(1)}_{k_{1}}}^{-1}\\
 &&\cdot\sum_{k_{(u+1,u_{F})}}E_{k_{(1,u)},k_{(u+1,u_{F})}}\sqrt{B^{(1)}_{k_{1}}}^{-1}\cdots\sqrt{B^{(u-1)}_{k_{(1,u-1)}}}^{-1}.\nonumber
 \end{eqnarray}
These elements are positive by construction and they satisfy the weak completeness relation $B^{(u)}_{k_{(1,u-1),0}}+B^{(u)}_{k_{(1,u-1),1}}=\mathbf{1}_{k_{(1,u-1)}}$, which is sufficient to define the POVM $\mathcal{N}^{(N)}$, as discussed in Appendix~\ref{appA}. Hence we are left to show that, when combining the binary elements Eq.~\eqref{elements} as in Eq.~\eqref{nested}, the elements $F_{k_{(1,u_{F})}}$ so constructed are equal to the $E_{k_{(1,u_{F})}}$. Indeed let us evaluate Eq.~\eqref{elements} for $u=u_{F}$, i.e., at the last step, noting that the sum contains only one term:
\begin{eqnarray}
 B^{(u_{F})}_{k_{(1,u_{F})}}&=&\sqrt{B^{(u_{F}-1)}_{k_{(1,u_{F}-1)}}}^{-1}\cdots\sqrt{B^{(1)}_{k_{1}}}^{-1}\nonumber\\
 &\cdot& E_{k_{(1,u_{F})}}\sqrt{B^{(1)}_{k_{1}}}^{-1}\cdots\sqrt{B^{(u_{F}-1)}_{k_{(1,u_{F}-1)}}}^{-1}.\label{last}
\end{eqnarray}
Let us then successively invert the outer square roots on the left-hand side of the equation exactly $u_{F}-1$ times, to obtain the relation 
\begin{eqnarray} 
E_{k_{(1,u_{F})}}=\left|\sqrt{B^{(u_{F})}_{k_{(1,u_{F})}}} \cdots\sqrt{B^{(1)}_{k_{1}}}\right|^{2}\equiv F_{k_{(1,u_{F})}},
\end{eqnarray}  which demonstrates that we can recover the initial POVM with the procedure outlined above. \\
This completes the proof when $N$ is an exact power of $2$. If this is not the case, it means that $\log_{2}N$ is not an integer and it suffices to consider the nested decomposition for the next higher integer, i.e., set $u_{F}=\lceil\log_{2}N\rceil+1$, $\tilde{N}=2^{u_{F}}$. Let us then trivially expand the initial $N$-outcome POVM to a $\tilde{N}$-outcome one as  
\begin{eqnarray} 
\mathcal{M}^{(\tilde{N})}=\mathcal{M}^{(N)}\cup\left\{E_{k_{(1,u_{F})}}=0, \forall j^{(k)}>N-1\right\},\end{eqnarray}  by adding $\tilde{N}-N$ null elements. The nested decomposition $\mathcal{N}^{(\tilde{N})}$ equivalent to $\mathcal{M}^{(\tilde{N})}$ can be computed again by Eqs.~(\ref{nested},\ref{elements}) and it comprises $\tilde{N}-N$ null elements too. If we isolate these elements from the rest we obtain a decomposition \begin{eqnarray} \mathcal{N}^{(\tilde{N})}=\mathcal{N}^{(N)}\cup\left\{F_{k_{(1,u_{F})}}=0, \forall j^{(k)}>N-1\right\},\end{eqnarray}  where $\mathcal{N}^{(N)}$ can be interpreted as a nested representation of the initial POVM $\mathcal{M}^{(N)}$.  \end{proof}
 
\section{An application: optimal quantum state discrimination}\label{StateDisc}
In this Section we apply the previous POVM decomposition to the problem of optimal state discrimination. Let us suppose we are given one copy of a quantum state, represented by a positive and trace-one operator $\rho_{j}$, chosen at random from a set $\mathcal{S}^{(N)}=\{\tilde{\rho}_{j}=p_{j}\rho_{j}\}_{j=0,\cdots,N-1}$ of $N$ states weighted with probability $p_{j}$, so that $\sum_{j=0}^{N-1}p_{j}=1$; we have to perform a measurement to decide which state was sent. If the states are not orthogonal, i.e., $\rho_{j}\rho_{k}\neq0$ for some values of $j,k$, and we are constrained to give a conclusive answer, there exists no measurement that can succeed with unit probability. The average success probability of discriminating the set of states $S^{(N)}$ with a $N$-outcome POVM $\mathcal{M}^{(N)}$, as defined in Sec.~\ref{Deco}, can be computed as 
\begin{eqnarray} P_{Succ}\left(S^{(N)},\mathcal{M}^{(N)}\right)=\sum_{j=0}^{N-1}\operatorname{Tr}\left[E_{j}\tilde{\rho}_{j}\right],\end{eqnarray} 
where each measurement outcome $E_{j}$ is associated with the detection of the respective weighted state $\tilde{\rho}_{j}$.
We are particularly interested in the optimal success probability, obtained by optimizing over all measurements: \begin{eqnarray} \mathbb{P}_{Succ}(\mathcal{S}^{(N)})=\max_{\mathcal{M}^{(N)}}P_{Succ}\left(\mathcal{S}^{(N)},\mathcal{M}^{(N)}\right).\end{eqnarray}  \\
Following Sec.~\ref{Deco}, we can always decompose the discrimination measurement into a sequence of nested binary ones, writing the success probability as \begin{eqnarray}
&&P_{Succ}\left(\mathcal{S}^{(N)},\mathcal{N}^{(N)}\right)=\sum_{k_{(1,u_{F})}}\operatorname{Tr}\left[F_{k_{(1,u_{F})}}\tilde{\rho}_{k_{(1,u_{F})}}\right]\nonumber\\
&&=\sum_{k_{(1,u_{F})}}\operatorname{Tr}\left[\left|\sqrt{B^{(u_{F})}_{k_{(1,u_{F})}}} \cdots\sqrt{B^{(1)}_{k_{1}}}\right|^{2}\tilde{\rho}_{k_{(1,u_{F})}}\right],\label{probNest}
\end{eqnarray}
where we have introduced the binary representation $k_{(1,u_{F})}$ for the labels $j$ of the states and measurement operators and employed the definition \eqref{nested} for the elements of the nested POVM.
This decomposition is interesting because it establishes a relation between the discrimination probability of a given set of states and that of its subsets of smaller size. Let us indeed suppose that the first measurement is successful, i.e., that an outcome $k_{1}$ occurs if one of the states $\rho_{k_{1},k_{(2,u_{F})}}$ with that value of the first bit was present. This happens with probability  $p_{Succ}(k_{1})=\sum_{k_{(2,u_{F})}}\operatorname{Tr}\left[B^{(1)}_{k_{1}}\tilde{\rho}_{k_{1},k_{(2,u_{F})}}\right]$. In this case the possible weighted states after the measurement are \begin{eqnarray} \tilde{\tau}_{k_{1},k_{(2,u_{F})}}=\sqrt{B^{(1)}_{k_{1}}}\tilde{\rho}_{k_{1},k_{(2,u_{F})}}\sqrt{B^{(1)}_{k_{1}}}/p_{Succ}(k_{1}),\end{eqnarray} forming a set of size $N/2$: $\mathcal{S}^{(N/2)}_{k_{1}}=\left\{\tilde{\tau}_{k_{1},k_{(2,u_{F})}}\right\}_{k_{2},\cdots,k_{u_{F}}=0,1}$. Moreover the collection of remaining measurements can be seen as a nested POVM of order $N/2$: \begin{eqnarray} \mathcal{N}^{(N/2)}_{k_{1}}=\left\{\mathcal{B}^{(u_{F})}_{k_{1},k_{(2,u_{F}-1)}}\right\}_{k_{2},\cdots,k_{u_{F}-1}=0,1}\circ\cdots\circ\left\{\mathcal{B}^{(2)}_{k_{1}}\right\}. \nonumber \end{eqnarray}
Hence we can easily rewrite the probability of discriminating the set of states $\mathcal{S}^{(N)}$ with the POVM $\mathcal{N}^{(N)}$, Eq.~\eqref{probNest}, as the probability of discriminating the set $\mathcal{S}^{(N/2)}_{k_{1}}$ with the POVM $\mathcal{N}^{(N/2)}_{k_{1}}$ if the first measurement had an outcome $k_{1}$, averaged over all values of $k_{1}=0,1$:
\begin{eqnarray}
P_{Succ}\Big(\mathcal{S}^{(N)}&&,  \mathcal{N}^{(N)}\Big)=\sum_{k_{1},k_{(2,u_{F})}}p_{Succ}(k_{1})\nonumber\\ &&\cdot\operatorname{Tr}\Bigg[\left|\sqrt{B^{(u_{F})}_{k_{1},k_{(2,u_{F})}}}\cdots\sqrt{B^{(2)}_{k_{1},k_{2}}}\right|^{2}\tilde{\tau}_{k_{1},k_{(2,u_{F})}}\Bigg]\nonumber\\
&&=\sum_{k_{1}}p_{Succ}(k_{1})P_{Succ}\left(\mathcal{S}^{(N/2)}_{k_{1}},\mathcal{N}^{(N/2)}_{k_{1}}\right).\label{rec}
\end{eqnarray}
This expression suggests a recursive optimization: if the optimal discrimination problem is solved for any set  $\mathcal{S}^{(N/2)}$ of a fixed size $N/2$, possibly restricting to a specific Hilbert space, e.g., qubits or continuous-variable states, then the result can be plugged into \eqref{rec} to obtain an expression for the discrimination probability of a set of double size, which depends on a single couple of measurement operators, i.e., the first-step binary POVM $\mathcal{B}^{(1)}$. However, if a general solution for the discrimination of a smaller set of states is not available, the problem remains hard, since when optimizing $P_{Succ}\left(\mathcal{S}^{(N/2)}_{k_{1}},\mathcal{N}^{(N/2)}_{k_{1}}\right)$ one still has to take into account the dependence of the states of $\mathcal{S}^{(N/2)}_{k_{1}}$ on the first-step POVM, which is itself subject to optimization afterwards, thus making the states arbitrary.\\
Fortunately the first step of the recursion has a well-known solution~\cite{Hel}: 
\begin{eqnarray} \mathbb{P}_{Succ}(\mathcal{S}^{(2)})=\left(1+\norm{\tilde{\rho}_{0}-\tilde{\rho}_{1}}_{1}\right)/2,
\end{eqnarray} 
 where $\norm{\cdot}_{1}=\operatorname{Tr}[|\cdot|]$ is the trace norm of the argument. Then by plugging this expression into the optimization of Eq.~\eqref{rec} for $N=4$ states we can write:
\begin{eqnarray}
\mathbb{P}_{Succ}\left(\mathcal{S}^{(4)}\right)&=&\max_{\mathcal{B}^{(1)}}\sum_{k_{1}}p_{Succ}(k_{1})\frac{1+\norm{\tilde{\tau}_{k_{1},0}-\tilde{\tau}_{k_{1},1}}_{1}}{2}\nonumber\\
&=&\max_{\mathcal{B}^{(1)}}\sum_{k_{1}}\frac{1}{2}\Bigg(\operatorname{Tr}\left[B^{(1)}_{k_{1}}\left(\tilde{\rho}_{k_{1},0}+\tilde{\rho}_{k_{1},1}\right)\right]\nonumber\\
&+&\norm{\sqrt{B^{(1)}_{k_{1}}}\left(\tilde{\rho}_{k_{1},0}-\tilde{\rho}_{k_{1},1}\right)\sqrt{B^{(1)}_{k_{1}}}}_{1}\Bigg).
\end{eqnarray}
We can write the latter equation more compactly by introducing the function
\begin{eqnarray} \label{DEFFQABC}
\mathcal{F}_{Q}(A,B,C)&=&\operatorname{Tr}\left[QA+\left|\sqrt{Q}B\sqrt{Q}\right| \right.\nonumber \\
&&\qquad \left.+\left|\sqrt{\mathbf{1}-Q} C \sqrt{\mathbf{1}-Q}\right|\right],\end{eqnarray} 
where $Q$ is a positive and less-than-one operator, while the arguments $A,B,C$ are hermitian operators, and its maximum over $Q$, i.e., 
\begin{eqnarray} 
\mathcal{F}(A,B,C)=\max_{\mathbf{1}\geq Q\geq0}\mathcal{F}_{Q}(A,B,C). \label{DEFFABC}
\end{eqnarray}  Setting $B^{(1)}_{0}=Q$ and $B^{(1)}_{1}=\mathbf{1}-Q$, we obtain:
 \begin{eqnarray}\mathbb{P}_{Succ}\Big(\mathcal{S}^{(4)}\Big)=\frac{p_{1,0}+p_{1,1}}{2}+\mathcal{F}\left(A^{(4)},B^{(4)},C^{(4)}\right),\hspace{20pt}\label{prob4}\end{eqnarray} 
 with $A^{(4)}=(\tilde{\rho}_{0,0}+\tilde{\rho}_{0,1}-\tilde{\rho}_{1,0}-\tilde{\rho}_{1,1})/2$, $B^{(4)}=(\tilde{\rho}_{0,0}-\tilde{\rho}_{0,1})/2$ and $C^{(4)}=(\tilde{\rho}_{1,0}-\tilde{\rho}_{1,1})/2$. Similarly for $N=3$ states we have:
\begin{eqnarray}
\mathbb{P}_{Succ}\Big(\mathcal{S}^{(3)}\Big)=p_{1,0}+\mathcal{F}\left(A^{(3)},B^{(3)},C^{(3)}\right),\label{prob3}
\end{eqnarray}
with $A^{(3)}=(\tilde{\rho}_{0,0}+\tilde{\rho}_{0,1})/2-\tilde{\rho}_{1,0}$, $B^{(3)}=B^{(4)}$ as before and $C^{(3)}=0$.
Thus the optimal discrimination problem of $N=3,4$ states has been reduced to the evaluation of the function $\mathcal{F}$, which requires an optimization over a single operator $Q$.\\
As already discussed, if the problem of Eq.~\eqref{prob4} were to be solved exactly for any set of states $\mathcal{S}^{(4)}$, then the result could be plugged into Eq.~\eqref{rec}, obtaining an expression for the optimal discrimination probability of $N=8$ states dependent only on the first binary POVM. Unfortunately a solution of Eqs.~(\ref{prob4},\ref{prob3}) can be found only in some specific cases, listed below and discussed in detail in Appendix~\ref{appB}. In the following we employ the positive part of an operator $X$, defined as $X_{+}=(X+|X|)/2$.
 \newtheorem{proposition}{Proposition}
\begin{proposition}\label{cases}
The value of the function $\mathcal{F}(A,B,C)$ of Eq.~(\ref{DEFFABC}) is 
\begin{eqnarray}\label{value} \mathcal{F}\left(A,B,C\right)=\operatorname{Tr}[(A+|B|-|C|)_{+}]+\norm{C}_{1}, \end{eqnarray} 
when at least one of the following conditions holds: i) the operators $B$ and $C$ have support respectively on the positive and negative support of $A$; ii) $B$ and $C$ have a definite sign; iii) $A$, $B$ and $C$ all commute with each other.
 \end{proposition}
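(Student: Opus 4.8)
\section*{Proof proposal}

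The plan is to establish Eq.~\eqref{value} by sandwiching $\mathcal{F}(A,B,C)$ between an upper bound that holds \emph{unconditionally} and a lower bound that invokes one of the three hypotheses. First I would prove that for arbitrary hermitian $A,B,C$ and every $\mathbf{1}\ge Q\ge0$,
\[
\mathcal{F}_Q(A,B,C)\le\operatorname{Tr}\!\left[Q\,(A+|B|-|C|)\right]+\norm{C}_1 .
\]
The only nontrivial input is the estimate $\operatorname{Tr}\bigl|\sqrt{Q}\,X\sqrt{Q}\bigr|\le\operatorname{Tr}[Q\,|X|]$, valid for any hermitian $X$: writing $X=X_+-X_-$ with $X_\pm\ge0$, the operator $\sqrt{Q}\,X\sqrt{Q}$ is the difference of the two positive operators $\sqrt{Q}\,X_\pm\sqrt{Q}$, so the triangle inequality for the trace norm and cyclicity give $\operatorname{Tr}|\sqrt{Q}X\sqrt{Q}|\le\operatorname{Tr}[\sqrt{Q}X_+\sqrt{Q}]+\operatorname{Tr}[\sqrt{Q}X_-\sqrt{Q}]=\operatorname{Tr}[Q|X|]$. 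Applying this to $X=B$ with $Q$ and to $X=C$ with $\mathbf{1}-Q$, and using $\operatorname{Tr}[(\mathbf{1}-Q)|C|]=\norm{C}_1-\operatorname{Tr}[Q|C|]$, yields the displayed bound. Since $\operatorname{Tr}[QM]\le\operatorname{Tr}[M_+]$ for every $\mathbf{1}\ge Q\ge0$ and every hermitian $M$, setting $M=A+|B|-|C|$ and maximizing over $Q$ proves $\mathcal{F}(A,B,C)\le\operatorname{Tr}[(A+|B|-|C|)_+]+\norm{C}_1$ with no assumptions at all.

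It then remains to exhibit, under each hypothesis, one admissible $Q$ that saturates this bound, i.e.\ for which both trace-norm inequalities above are equalities and $\operatorname{Tr}[QM]=\operatorname{Tr}[M_+]$. In case (ii), if $B$ and $C$ are of definite sign then $\sqrt{Q}B\sqrt{Q}$ and $\sqrt{\mathbf{1}-Q}C\sqrt{\mathbf{1}-Q}$ are of definite sign for \emph{every} $Q$, so $\mathcal{F}_Q(A,B,C)=\operatorname{Tr}[Q(A+|B|-|C|)]+\norm{C}_1$ identically and the maximum is read off immediately. In case (iii), since $A,B,C$ mutually commute, $M=A+|B|-|C|$ commutes with $B$ and $C$; taking $Q$ to be the spectral projector of $M$ onto its strictly positive part, $Q$ and $\mathbf{1}-Q$ commute with $B$ and $C$ respectively, which upgrades both trace-norm inequalities to equalities, while $\operatorname{Tr}[QM]=\operatorname{Tr}[M_+]$ by construction. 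In case (i), let $P_+^A,P_-^A$ be the spectral projectors of $A$ onto its positive and negative eigenspaces and take $Q=P_+^A$; the hypotheses, read as $P_+^A B P_+^A=B$ and $P_-^A C P_-^A=C$, give $\sqrt{Q}B\sqrt{Q}=B$ and $\sqrt{\mathbf{1}-Q}C\sqrt{\mathbf{1}-Q}=C$, so the last two terms of $\mathcal{F}_Q$ equal $\norm{B}_1$ and $\norm{C}_1$ and the first equals $\operatorname{Tr}[A_+]$; one then checks that $M$ is block-diagonal for the $A$-eigenspace splitting, strictly positive on $\operatorname{ran}P_+^A$ and strictly negative on $\operatorname{ran}P_-^A$ (and zero on $\ker A$), so that $\operatorname{Tr}[(A+|B|-|C|)_+]=\operatorname{Tr}[A_+]+\norm{B}_1$ and the value matches Eq.~\eqref{value}.

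I expect the only genuinely delicate step to be case (i): one must translate ``$B$ has support on the positive support of $A$'' into the operator identity $P_+^A B P_+^A=B$ (and the analogue for $C$), confirm that $A+|B|$ is strictly positive on $\operatorname{ran}P_+^A$ and $-A-|C|$ strictly negative on $\operatorname{ran}P_-^A$ so that the positive part of $M$ is exactly its $P_+^A$-block, and keep track of $\ker A$, on which $B$, $C$ and hence $M$ all vanish. Everything else reduces to cyclicity of the trace and the elementary fact that the Jordan decomposition is the minimal way to write a hermitian operator as a difference of two positive operators.
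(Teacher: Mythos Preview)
Your argument is correct and, for cases (ii) and (iii), essentially identical to the paper's: the same trace--norm estimate $\operatorname{Tr}\bigl|\sqrt{Q}X\sqrt{Q}\bigr|\le\operatorname{Tr}[Q\,|X|]$ gives the universal upper bound $\mathcal{F}_Q\le\operatorname{Tr}[Q(A+|B|-|C|)]+\norm{C}_1\le\operatorname{Tr}[(A+|B|-|C|)_+]+\norm{C}_1$, and saturation is obtained with the same choices of $Q$.

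The organizational difference lies in case (i). The paper does \emph{not} reuse the universal upper bound there; instead it proves a subadditivity property $\mathcal{F}(\sum_j A_j,\sum_j B_j,\sum_j C_j)\le\sum_j\mathcal{F}(A_j,B_j,C_j)$, establishes two auxiliary lemmas computing $\mathcal{F}$ when $A$ has a definite sign, and splits $A=A_+\oplus(-A_-)$ to obtain $\mathcal{F}(A,B,C)\le\mathcal{F}(A_+,B,0)+\mathcal{F}(-A_-,0,C)=\operatorname{Tr}[A_+]+\norm{B}_1+\norm{C}_1$, saturated by $Q=\mathbf{1}_{A_+}$. Your route is more economical: having already proved the universal upper bound, you simply evaluate $\mathcal{F}_Q$ at $Q=P_+^A$ and check, via the block structure of $M=A+|B|-|C|$ under the hypotheses, that the resulting value $\operatorname{Tr}[A_+]+\norm{B}_1+\norm{C}_1$ coincides with $\operatorname{Tr}[M_+]+\norm{C}_1$. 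This avoids subadditivity and the two lemmas entirely, at the cost of the small block--diagonal verification you flag at the end (which the paper also performs, after the fact, to reconcile its case-(i) answer with Eq.~\eqref{value}). Both approaches land on the same optimal $Q$ and the same final identity; yours is just a single unified pipeline rather than two separate upper-bound mechanisms.
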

 \newtheorem{remark}{Remark}
 \begin{remark}
 In the first case of Proposition~\ref{cases}, i.e., that the operators $B$ and $C$ have support respectively on the positive and negative support of $A$, the expression \eqref{value} can be simplified as
 \begin{eqnarray}
 \mathcal{F}\left(A,B,C\right)=\operatorname{Tr}\left[A_{+}\right]+\norm{B}_{1}+\norm{C}_{1}.
 \end{eqnarray}
 \end{remark}
\begin{remark}\label{recRem}
The optimal success probability is invariant under exchange of the states, i.e., under relabelling of the indices $k_{1},k_{2}$ in our case $N=3,4$. Hence it can happen that the conditions listed in Proposition~\ref{cases} are valid only for $A, B, C$ given by a specific ordering of the states.
\end{remark}
The previous remark implies that, when checking whether a set of states satisfies the conditions of Proposition~\ref{cases} or not, one has to consider all possible sets of $A, B, C$ obtainable by different orderings of the states, not only the conventional one employed in Eqs.~(\ref{prob4},\ref{prob3}). Alternatively, one can apply this symmetry under exchange of the states to obtain recursive relations for $\mathcal{F}(A,B,C)$, e.g., for $N=3$ and by exchanging $(0,0)\leftrightarrow(1,0)$, it holds
\begin{equation}
\mathcal{F}(A,B,0)=\frac{\mathcal{F}(-3B-A,B-A,0)}{2}+\operatorname{Tr}[A+B];\label{rec3}
\end{equation}
then Proposition~\ref{cases} holds on the right-hand side of \eqref{rec3} when $B'=B-A$ has a definite sign, but the latter is simply $B'=(\tilde{\rho}_{1,0}-\tilde{\rho}_{0,1})/2$, an expression of the operator $B$ for the new ordering of the states. 
 \begin{remark}
 In all the cases listed in Proposition~\ref{cases}, with the conventional ordering of the states of Eqs.~(\ref{prob4},\ref{prob3}), the optimal success probabilities for the discrimination of $N=3,4$ states become\begin{widetext}
 \begin{eqnarray}
 &&\mathbb{P}_{Succ}\Big(\mathcal{S}^{(3)}\Big)=p_{1,0}+\operatorname{Tr}\left[\frac{\left(\tilde{\rho}_{0,0}+\tilde{\rho}_{0,1}-2\tilde{\rho}_{1,0}+|\tilde{\rho}_{0,0}-\tilde{\rho}_{0,1}|\right)_{+}}{2}\right],\label{p3}\\
 \label{p4}
 &&\mathbb{P}_{Succ}\Big(\mathcal{S}^{(4)}\Big)=\frac{p_{1,0}+p_{1,1}}{2}+\operatorname{Tr}\left[\frac{\left(\tilde{\rho}_{0,0}+\tilde{\rho}_{0,1}-\tilde{\rho}_{1,0}-\tilde{\rho}_{1,1}+|\tilde{\rho}_{0,0}-\tilde{\rho}_{0,1}|-|\tilde{\rho}_{1,0}-\tilde{\rho}_{1,1}|\right)_{+}}{2}\right]+\norm{\frac{\tilde{\rho}_{1,0}-\tilde{\rho}_{1,1}}{2}}_{1}.
 \end{eqnarray}\end{widetext} 
 \end{remark}

\section{A numerical example: the $N=3,4$ qubit case}\label{Qubits}
In this Section we analyze the discrimination probability obtained with the nested POVM decomposition in the case of $N=3,4$ \textit{qubit} states. Indeed, since Eqs.~(\ref{prob4},\ref{prob3}) seem not to be solvable analytically for generic sets of states, it is interesting to tackle the problem by choosing the simplest possible Hilbert space for the measured system, i.e., the qubit space $\mathcal{H}_{2}$ of dimension two. It is well known that the density matrices $\rho$ of this system can be represented 
 as a real  vector $\vec{v}_{\rho}$ inside a three-dimensional unit sphere (the Bloch sphere), i.e. 
  $\rho=(\mathbf{1}_{2}+\vec{v}_{\rho}\cdot\vec{\sigma})/2$ where $\mathbf{1}_{2}$ is the identity operator and 
  $\vec{\sigma} = (\sigma_1,\sigma_2,\sigma_3)$ is the vector of Pauli matrices 
\begin{eqnarray} \sigma_{1}=\left(\begin{array}{cc}0&1\\1&0\end{array}\right), \;  \sigma_{2}=\left(\begin{array}{cc}0&-i\\i&0\end{array}\right),\; \sigma_{3}=\left(\begin{array}{cc}1&0\\0&-1\end{array}\right). \nonumber 
\end{eqnarray} 
  In particular, pure states are situated on the sphere's surface, i.e., $v_{\rho}=|\vec{v}_{\rho}|=1$ for $\rho=|\psi\rangle\langle\psi|$, while the completely mixed state $\mathbf{1}_{2}/2$ is at the origin. 
 More generally, any hermitian operator $X$ on the qubit space can be expressed in terms  of  four real coefficients: a scalar $c_{X}$, which represents the normalization coefficient of the operator, and a vector $\vec{r}_{X}$, which represents the operator in the Bloch space, i.e. 
 \begin{eqnarray} X=c_{X}\mathbf{1}_{2}+\vec{r}_{X}\cdot\vec{\sigma},\end{eqnarray} 
the trace of the operator being determined by  $\operatorname{Tr}[X]=2c_{X}$, while its eigenvalues by $\lambda_{X}^{(\pm)}=c_{X}\pm r_{X}$ with $r_X= |\vec{r}_X|$.

Employing the representation described above we can hence rewrite  the function $\mathcal{F}_{Q}(A,B,C)$ as (see Appendix~\ref{appC} for details) 
\begin{widetext}
\begin{eqnarray}
\mathcal{F}_{Q}(A,B,C)\Big|_{\mathcal{H}_{2}}&=&2\Bigg(c_{Q}c_{A}+\vec{r}_{Q}\cdot\vec{r}_{A}
+\sqrt{(c_{Q}c_{B}+\vec{r}_{Q}\cdot\vec{r}_{B})^{2}+\left(\left(r_{B}\right)^{2}-\left(c_{B}\right)^{2}\right)\left(\left(c_{Q}\right)^{2}-\left(r_{Q}\right)^{2}\right)}\nonumber\\
&+&\sqrt{((1-c_{Q})c_{C}-\vec{r}_{Q}\cdot\vec{r}_{C})^{2}+\left(\left(r_{B}\right)^{2}-\left(c_{B}\right)^{2}\right)\left(\left(1-c_{Q}\right)^{2}-\left(r_{Q}\right)^{2}\right)}\Bigg),\label{fQub}
\end{eqnarray}
 when $B$ and $C$ do not have a definite sign, or
\begin{eqnarray}
\mathcal{F}_{Q}(A,B,C)\Big|_{\mathcal{H}_{2}}&=&2\Bigg(c_{Q}c_{A}+\vec{r}_{Q}\cdot\vec{r}_{A}+c_{Q}c_{|B|}+\vec{r}_{Q}\cdot\vec{r}_{|B|}+\left(1-c_{Q}\right)c_{|C|}-\vec{r}_{Q}\cdot\vec{r}_{|C|}\Bigg)\nonumber\\
&=&2\left(c_{Q}c_{A+|B|-|C|}+\vec{r}_{Q}\cdot\vec{r}_{A+|B|-|C|}+c_{|C|}\right),\label{defSign}
\end{eqnarray}\end{widetext} 
when both $B$ and $C$ have a definite sign, with $c_{|B|}=\pm c_{B}$, $\vec{r}_{|B|}=\pm\vec{r}_{B}$ respectively for $B\geq 0$ and $B\leq0$ and similar definitions for $|C|$.\\
For each set of $N=3,4$ qubit states to discriminate, the operators $A, B, C$, i.e., their coefficients $c$ and $\vec{r}$, are fixed and the optimization of Eqs.~(\ref{fQub}, \ref{defSign}) is to be carried out only over Q, i.e., on its coefficients $c_{Q}$ and $\vec{r}_{Q}$ under the constraints
\begin{eqnarray} 1\geq c_{Q}\geq0, \qquad r_{Q}\leq\min[c_{Q},1-c_{Q}]\;, \end{eqnarray} 
that ensure the positivity of $Q$ and the fact that it must be smaller than one.
In particular for $N=3$ states $C=0$ and one can show that $c_{Q}+r_{Q}=\lambda^{(+)}_{Q}=1$ is optimal. Moreover the optimal $\vec{r}_{Q}$ lies on the plane of $\vec{r}_{A}$ and $\vec{r}_{B}$, so that it can be defined in terms of its norm $r_{Q}$ and a single angle $\phi_{Q}$ as $\vec{r}_{Q}\cdot\vec{r}_{A}=r_{Q}r_{A}\cos\phi_{Q}$. Then in this case it is only required to optimize two parameters, namely $c_{Q}$ and $\phi_{Q}$. For $N=4$ instead there are no further simplifications and one has to optimize four parameters, with constraints.\\
 Let us now consider the case in which Eq.~\eqref{defSign} is valid, i.e., $B$ and $C$ have a definite sign. It is one of the situations considered in Proposition~\ref{cases}, thus the optimization of \eqref{defSign} must match the expression~\eqref{value}. This fact is already quite clear if we express Eq.~\eqref{defSign} directly in terms of the initial operators; furthermore it can also be shown by direct analytical optimization that in this case \begin{equation}\mathcal{F}(A,B,C)\Big|_{\mathcal{H}_{2}}=|c_{A+|B|-|C|}|+c_{A+|B|-|C|}+2c_{C}\end{equation} and the optimal value of $Q$ is $r_{Q}=0$ and $c_{Q}=\theta(c_{A+|B|-|C|})$, with $\theta(\cdot)$ the step function, valued $1$ when its argument is positive and zero otherwise. \\
 As for the other case, in which Eq.~\eqref{fQub} is valid, unfortunately the function cannot be completely optimized analytically. Nevertheless its numerical optimization is straightforward and thus, together with Eqs.~(\ref{prob4}, \ref{prob3}), it provides a convenient method to obtain the optimal success probability of discrimination and the optimal measurement operators for $N=3,4$ qubit states. 
\begin{figure}[t!]
\includegraphics[scale=.24]{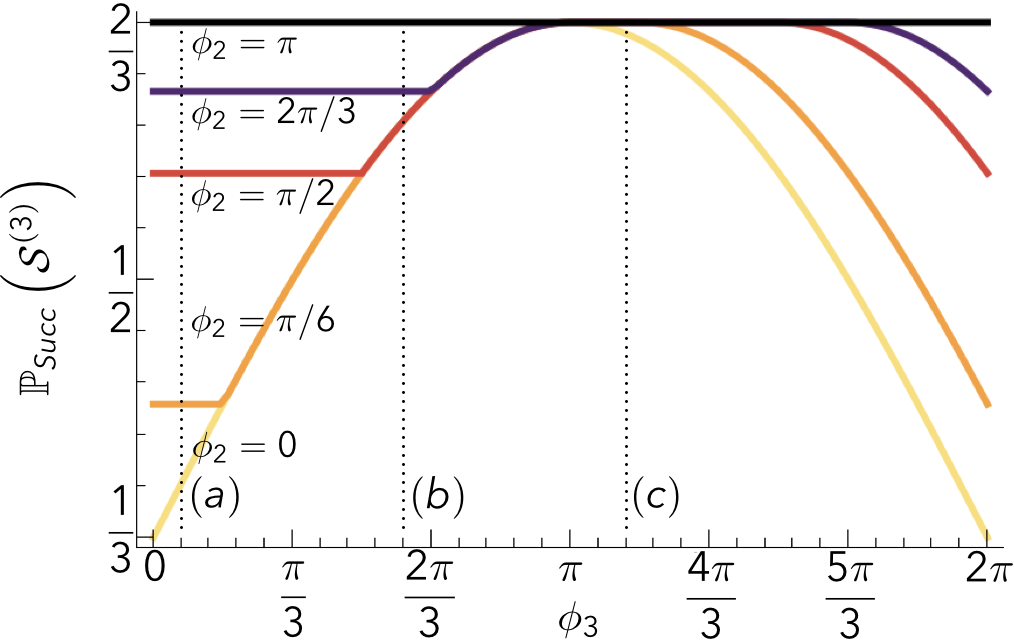}
\caption{Plot of $\mathbb{P}_{Succ}\left(\mathcal{S}^{(3)}\right)$, the optimal success probability \eqref{prob3} for a set of three equiprobable pure qubit states, identified by the Bloch vectors $\vec{r}_{\rho_{1}}=(1,0,0)$, $\vec{r}_{\rho_{2}}=(\cos\phi_{2},\sin\phi_{2},0)$ and $\vec{r}_{\rho_{3}}=(\cos\phi_{3},\sin\phi_{3},0)$, as a function of $\phi_{3}$ for several values of $\phi_{2}=0, \pi/6, \pi/2, 2\pi/3, \pi$ (respectively from yellow/light-gray to black). The results are obtained by numerical optimization of Eq.~\eqref{fQub} over $c_{Q}$, $\vec{r}_{Q}$. Observe that, for all values of $\phi_{2}$, there is a range of values of $\phi_{3}$ where $\mathbb{P}_{Succ}\left(\mathcal{S}^{(3)}\right)$ attains the maximum allowed for non-orthogonal states, i.e., the same as for symmetric states. Outside of this range the quantity decreases, reaching a constant minimum for $\phi_{3}\leq\phi_{2}$ (see text and Fig.~\ref{fig2} for an explanation). The cases $\phi_{2}=\pi/6, 2\pi/3$ are explicitly depicted in Fig.~\ref{fig2} for three values of $\phi_{3}$ identified by the labelled dotted lines.}\label{fig1}
\end{figure} 
As an example let us consider $N=3$ \textit{equiprobable pure} qubit states situated on the $(x,y)$ plane of the Bloch sphere, i.e., a combination of $\mathbf{1}_{2}$, $\sigma_{1}$ and $\sigma_{2}$; this is a simple choice for the sake of clarity, but we stress that no additional optimization difficulties are met when considering non-equiprobable and mixed states. Let us fix the first state to be on the $x$ axis, i.e., $\vec{r}_{\rho_{1}}=(1,0,0)$, without loss of generality. Then we can study the optimal success probability by varying the angles of the other two vectors with respect to the first one: $\vec{r}_{\rho_{2}}=(\cos\phi_{2},\sin\phi_{2},0)$ and $\vec{r}_{\rho_{3}}=(\cos\phi_{3},\sin\phi_{3},0)$. If the states are also symmetrically distributed at constant angles along the circumference, i.e., $\phi_{2}=\phi_{3}=2\pi/3$, the result is well-known~\cite{Hel}: $\mathbb{P}_{Succ}\left(\mathcal{S}_{sym}^{(3)}\right)=2/3$, which is the maximum success probability of discrimination for any three equiprobable qubit states (because no other configuration can achieve a lower average state-overlap than this one). 
\begin{figure}[t!]
\includegraphics[scale=.24]{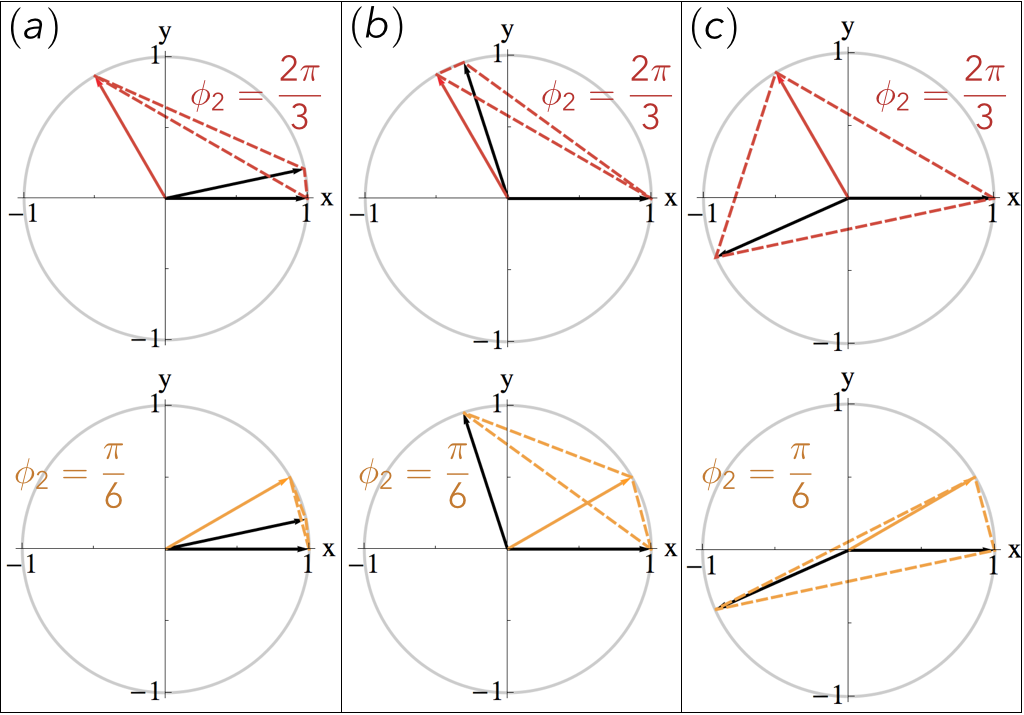}
\caption{Plot of the vectors of the three states $\vec{r}_{\rho_{1}}=(1,0,0)$ (black fixed on $x$ axis), $\vec{r}_{\rho_{3}}=(\cos\phi_{3},\sin\phi_{3},0)$ (black) and $\vec{r}_{\rho_{2}}=(\cos\phi_{2},\sin\phi_{2},0)$ (red/dark gray at the top and orange/light gray at the bottom) on the $(x,y)$ Bloch plane, as well as the triangles formed by them (same color codes as $\vec{r}_{\rho_{2}}$). The labels $a, b, c$ refer respectively to values of $\phi_{3}=\pi/15, 3\pi/5, 7\pi/5$, also highlighted in Fig.~\ref{fig1}, while two values of $\phi_{2}= 2\pi/3,\pi/6$ (respectively red/dark gray and orange/light gray figures) are considered. By comparison with Fig.~\ref{fig1} it is evident that $\mathbb{P}_{Succ}\left(\mathcal{S}^{(3)}\right)$ is maximum when the triangle formed by the states contains the origin ($c$), while it is lower otherwise. In particular, when $\phi_{3}\leq\phi_{2}$ ($a, b$ top, $a$ bottom) the largest side of the triangle formed by the states is always $\vec{r}_{\rho_{2}}-\vec{r}_{\rho_{1}}$ and this determines completely the optimal success probability.}\label{fig2}
\end{figure}
For more general states the results are shown in Fig.~\ref{fig1} where we plot the optimal success probability \eqref{prob3}, computed by numerical optimization of \eqref{fQub} over $c_{Q}$, $\vec{r}_{Q}$, as a function of the third angle $\phi_{3}$ and for several choices of $\phi_{2}=0, \pi/6, \pi/2, 2\pi/3, \pi$. It can be seen that, for all values of $\phi_{2}$, there is a range of values of $\phi_{3}$ for which $\mathbb{P}_{Succ}\left(\mathcal{S}^{(3)}\right)$ is equal to the maximum value of $2/3$, even though the states are not symmetrically distributed on the circumference. In other words there is a wide class of states that can be discriminated with performance as good as if they were symmetric. Outside of this range, whose width depends on $\phi_{2}$, the value of  $\mathbb{P}_{Succ}\left(\mathcal{S}^{(3)}\right)$ decreases and it reaches a constant minimum when $\phi_{3}\leq\phi_{2}$. \\
These peculiarities can be explained by referring to Refs.~\cite{qubits,bae,threeQubits}. In particular Ref.~\cite{bae} states that the optimal success probability of discrimination of a set $\mathcal{S}_{eq}^{(N)}=\{\rho_{j}/N\}_{j=0,\cdots,N-1}$ of $N$ \textit{equiprobable} qubit states can be found by: i) considering the geometric figure determined by the weighted states in the Bloch space, i.e., their polytope of vertices $\{\vec{r}_{\rho_{j}}/N\}$; ii) finding the polytope similar to the latter and that is also maximal in the Bloch sphere; iii) computing the ratio $R$ between the original and the maximal polytope. Then the optimal success probability is $\mathbb{P}_{Succ}\left(\mathcal{S}_{eq}^{(N)}\right)=\frac{1}{N}+R$. 
In light of this observation, we can explain the results of Fig.~\ref{fig1} by plotting the states in the Bloch sphere and studying the polygon formed by their vertices, as done in Fig.~\ref{fig2} for two values of $\phi_{2}=\pi/6, 2\pi/3$ used in Fig.~\ref{fig1} and $\phi_{3}=\pi/15, 3\pi/5, 7\pi/5$, corresponding to the dotted lines labelled $a, b, c$ in Fig.~\ref{fig1}. If the polytope determined by the qubits, usually a triangle, contains the origin, then the optimal success probability is still maximum, i.e., $\mathbb{P}_{Succ}\left(\mathcal{S}_{\supseteq0}^{(3)}\right)\equiv\mathbb{P}_{Succ}\left(\mathcal{S}_{sym}^{(3)}\right)$, as in Fig.~\ref{fig2}$c$. Indeed in this case the polytope formed by the states is already maximal in the Bloch sphere and $R=1/3$, as for the symmetric set. On the other hand, if the polytope does not contain the origin, the optimal success probability is strictly lower than the maximum one, i.e.,  $\mathbb{P}_{Succ}\left(\mathcal{S}_{\nsupseteq0}^{(3)}\right)<\mathbb{P}_{Succ}\left(\mathcal{S}_{sym}^{(3)}\right)$, as in Fig.~\ref{fig2}$a,b$. Indeed in this second case the polytope formed by the states is not maximal and can be expanded until its largest side matches a diameter of the circumference, so that $R<1/3$.
As for the region $\phi_{3}\leq\phi_{2}$ where $\mathbb{P}_{Succ}\left(\mathcal{S}_{\nsupseteq0}^{(3)}\right)$ is minimum and constant (as in Fig.~\ref{fig2}$a,b$ top and $a$ bottom), it can be explained by observing that the largest side of the triangle determined by the states, which in turn determines $R$, is always the one that connects $\vec{r}_{\rho_{1}}$ and $\vec{r}_{\rho_{2}}$, independently of $\vec{r}_{\rho_{3}}$. Since $\vec{r}_{\rho_{2}}-\vec{r}_{\rho_{1}}$ is constant for constant $\phi_{2}$, $R$ is constant too in this case.

\section{Conclusions} \label{Conc}
In this article we proposed a method to compute the optimal discrimination probability and optimal measurement operators of an arbitrary set of $N$ states. We showed how to decompose any multiple-outcome measurement into several binary-outcome steps, which could be of interest also in other contexts. For the discrimination problem this decomposition introduces a connection between the success probabilities of sets of different sizes, possibly simplifying the optimization procedure, but does not allow to reach a general analytical solution. Nevertheless it proves to be a useful tool for quickly determining the optimal discrimination probability of $N=3, 4$ qubit states, requiring just a simple numerical optimization. Indeed with this method we were able to highlight some interesting properties, explicitly verifying the validity and geometric insight of some previous results~\cite{qubits,bae,threeQubits}. Future lines of work could focus on simplifying the optimization for higher-dimensional systems and larger sets of states or investigating different kinds of measurement decompositions.

\section{Acknowledgments}
GdP acknowledges financial support from the European Research Council (ERC Grant Agreement no 337603), the Danish Council for Independent Research (Sapere Aude) and VILLUM FONDEN via the QMATH Centre of Excellence (Grant No. 10059).

\appendix
\section{Completeness of the nested POVM}\label{appA} 
In this appendix we show that the nested POVM defined in \eqref{nested} is complete, i.e., the sum of its elements is the identity on the whole Hilbert space of the system. This can be shown by employing the completeness of each binary POVM's $\mathcal{B}^{(u)}_{k_{(1,u-1)}}$ at each step $u$. Indeed we can start by summing over the last bit $k_{u_{F}}=0,1$, coupling elements that differ only for its value, i.e., $F_{k_{(1,u_{F}-1)},0}$ and $F_{k_{(1,u_{F}-1)},1}$. 
These are made of the same sequence of operators apart from the most interior ones, $B^{(u_{F})}_{k_{(1,u_{F}-1)},0}$ and $B^{(u_{F})}_{k_{(1,u_{F}-1)},1}$, which are instead two different elements of the same binary POVM $\mathcal{B}^{(u_{F})}_{k_{(1,u_{F}-1)}}$, thus satisfy a completeness relation and their sum can be simplified. The same procedure is then applied recursively on previous bits as follows:
\begin{widetext}
\begin{eqnarray}
\sum_{k_{1},\cdots,k_{u_{F}}}F_{k_{(1,u_{F})}}&=&\sum_{k_{1},\cdots,k_{u_{F}-1}}\left(F_{k_{(1,u_{F}-1)},0}+F_{k_{(1,u_{F}-1)},1}\right)\nonumber\\
&=&\sum_{k_{1},\cdots,k_{u_{F}-1}}\sqrt{B^{(1)}_{k_{1}}}\cdots\sqrt{B^{(u_{F}-1)}_{k_{(1,u_{F}-1)}}}\left(B^{(u_{F})}_{k_{(1,u_{F}-1)},0}+B^{(u_{F})}_{k_{(1,u_{F}-1)},1}\right)\sqrt{B^{(u_{F}-1)}_{k_{(1,u_{F}-1)}}}\cdots\sqrt{B^{(1)}_{k_{1}}}\nonumber\\
&=&\sum_{k_{1},\cdots,k_{u_{F}-1}}\sqrt{B^{(1)}_{k_{1}}}\cdots\sqrt{B^{(u_{F}-2)}_{k_{(1,u_{F}-2)}}}B^{(u_{F}-1)}_{k_{(1,u_{F}-1)}}\sqrt{B^{(u_{F}-2)}_{k_{(1,u_{F}-2)}}}\cdots\sqrt{B^{(1)}_{k_{1}}}\nonumber\\
&=&\sum_{k_{1},\cdots,k_{u_{F}-2}}\sqrt{B^{(1)}_{k_{1}}}\cdots\sqrt{B^{(u_{F}-2)}_{k_{(1,u_{F}-2)}}}\left(B^{(u_{F}-1)}_{k_{(1,u_{F}-2)},0}+B^{(u_{F}-1)}_{k_{(1,u_{F}-2)},1}\right)\sqrt{B^{(u_{F}-2)}_{k_{(1,u_{F}-2)}}}\cdots\sqrt{B^{(1)}_{k_{1}}}\nonumber\\
&=&\cdots=\sum_{k_{1}}\sqrt{B^{(1)}_{k_{1}}}\left(B^{(2)}_{k_{1},0}+B^{(2)}_{k_{1},1}\right) \sqrt{B^{(1)}_{k_{1}}} =B^{(1)}_{0}+B^{(1)}_{1}=\mathbf{1}.\label{complete}
\end{eqnarray}
\end{widetext}
We note that the previous result does not change if instead of employing complete binary POVM's, we relax to weak completeness, as defined in Sec.~\ref{Deco}, i.e., that each measurement $\mathcal{B}^{(u)}_{k_{(1,u-1)}}$ is complete on the support of the operator that preceeds it in the nested decomposition, $B^{(u-1)}_{k_{(1,u-1)}}$. In this case it still holds $\sqrt{B^{(u-1)}_{k_{(1,u-1)}}}\left(B^{(u)}_{k_{(1,u-1)},0}+B^{(u)}_{k_{(1,u-1)},1}\right)\sqrt{B^{(u-1)}_{k_{(1,u-1)}}}=B^{(u-1)}_{k_{(1,u-1)}}$ and the equalities in \eqref{complete} are unchanged.

\section{Detailed study of $\mathcal{F}$}\label{appB} 
In this appendix we study the function $\mathcal{F}_{Q}(A,B,C)$ appearing in Eqs.~(\ref{DEFFQABC}), and discuss its optimization in the cases mentioned in Sec.~\ref{StateDisc}. 
The optimization of $\mathcal{F}_{Q}(A,B,C)$ is difficult because of the competing interests of the three terms composing it. Indeed each single term of Eq.~\eqref{DEFFQABC} is maximized by a different operator $Q$: the first one is maximum when $Q$ is the projector on the positive support of $A$; the second one is maximum when $Q$ is the identity on the whole Hilbert space of the system; the third one is maximum when $Q$ is zero. Hence we can solve the problem exactly only if the three operators exhibit specific properties. \\
We start by observing that the function is subadditive in all its arguments, i.e., for any set of operators $\{A_{j}, B_{j}, C_{j}\}_{j=1,\cdots,n}$ it holds:
\begin{eqnarray}
\mathcal{F}(\sum_{j}A_{j},\sum_{j}B_{j},\sum_{j}C_{j})\leq\sum_{j}\mathcal{F}(A_{j},B_{j},C_{j}).\label{subAdd}
\end{eqnarray}
This follows from the subadditivity of the trace norm. 
We can now state some lemmas that help demonstrate Proposition~\ref{cases}. Throughout the Appendix, the notation $\mathbf{1}_{X}$ represents the projector on the support of the operator $X$.
\newtheorem{lemma}{Lemma}
\begin{lemma}\label{aPos}
Let us suppose that $A$ is positive semidefinite, $B$ has support inside the support of $A$ and that $C$ and $A$ have orthogonal supports. Then $\mathcal{F}(A,B,C)=\operatorname{Tr}[A]+\norm{B}_{1}+\norm{C}_{1}$.
\end{lemma}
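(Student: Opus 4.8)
The plan is to obtain the value by squeezing $\mathcal{F}(A,B,C)=\max_{\mathbf{1}\geq Q\geq 0}\mathcal{F}_Q(A,B,C)$ between a matching upper and lower bound, the point being that the hypotheses on the supports are exactly what is needed to make a single choice of $Q$ saturate all three terms of $\mathcal{F}_Q$ simultaneously, so that the ``competing interests'' noted after Eq.~\eqref{subAdd} disappear.

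For the upper bound I would estimate the three terms of Eq.~\eqref{DEFFQABC} separately for an arbitrary feasible $Q$. Since $A\geq 0$ and $0\leq Q\leq\mathbf{1}$, cyclicity and operator monotonicity of the trace give $\operatorname{Tr}[QA]=\operatorname{Tr}[\sqrt{A}\,Q\,\sqrt{A}]\leq\operatorname{Tr}[\sqrt{A}\,\mathbf{1}\,\sqrt{A}]=\operatorname{Tr}[A]$. For the other two terms, $\|\sqrt{Q}\|_\infty\leq 1$ and $\|\sqrt{\mathbf{1}-Q}\|_\infty\leq 1$, so the inequality $\norm{XYZ}_1\leq\norm{X}_\infty\norm{Y}_1\norm{Z}_\infty$ yields $\operatorname{Tr}[|\sqrt{Q}B\sqrt{Q}|]\leq\norm{B}_1$ and $\operatorname{Tr}[|\sqrt{\mathbf{1}-Q}\,C\,\sqrt{\mathbf{1}-Q}|]\leq\norm{C}_1$. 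Adding, $\mathcal{F}_Q(A,B,C)\leq\operatorname{Tr}[A]+\norm{B}_1+\norm{C}_1$ for every feasible $Q$, hence the same bound holds after taking the maximum.

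For the lower bound I would exhibit the optimizer $Q=\mathbf{1}_A$, the projector onto the support of $A$, which is clearly feasible. Then $\operatorname{Tr}[\mathbf{1}_A A]=\operatorname{Tr}[A]$; because $\sqrt{\mathbf{1}_A}=\mathbf{1}_A$ and $B$ has support inside that of $A$ (so, $B$ being Hermitian, also $\ker B\supseteq\operatorname{supp}(A)^\perp$), one has $\mathbf{1}_A B\mathbf{1}_A=B$ and the second term equals $\norm{B}_1$; and since $\mathbf{1}-\mathbf{1}_A$ projects onto $\operatorname{supp}(A)^\perp$, which contains the support of $C$ by hypothesis, $(\mathbf{1}-\mathbf{1}_A)C(\mathbf{1}-\mathbf{1}_A)=C$ and the third term equals $\norm{C}_1$. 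Thus $\mathcal{F}(A,B,C)\geq\mathcal{F}_{\mathbf{1}_A}(A,B,C)=\operatorname{Tr}[A]+\norm{B}_1+\norm{C}_1$, which together with the upper bound proves the identity.

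There is essentially no serious obstacle: the only points needing a little care are the trace-norm inequality $\norm{XYZ}_1\leq\norm{X}_\infty\norm{Y}_1\norm{Z}_\infty$ used in the upper bound, and the elementary bookkeeping that the support hypotheses make $B$ and $C$ invariant under the projections $\mathbf{1}_A$ and $\mathbf{1}-\mathbf{1}_A$ respectively (for Hermitian operators, support inside a subspace means the operator acts trivially on its orthogonal complement from both sides). The lemma will then feed directly into Proposition~\ref{cases}: its first case is precisely the situation of Lemma~\ref{aPos} applied to $A_+$ and $A_-$ via the subadditivity~\eqref{subAdd}, while the other cases will be handled by analogous support arguments.
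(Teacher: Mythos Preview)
Your proof is correct and essentially matches the paper's: both obtain the lower bound by exhibiting $Q=\mathbf{1}_{A}$ and checking that the support hypotheses make all three terms attain their maxima simultaneously. The only cosmetic difference is in the upper bound: the paper invokes the subadditivity property~\eqref{subAdd} of $\mathcal{F}$ to split $\mathcal{F}(A,B,C)\leq\mathcal{F}(A,0,0)+\mathcal{F}(0,B,0)+\mathcal{F}(0,0,C)$ and then evaluates each piece, whereas you bound the three terms of $\mathcal{F}_{Q}$ directly for arbitrary feasible $Q$ via $\operatorname{Tr}[QA]\leq\operatorname{Tr}[A]$ and the H\"older-type inequality $\norm{XYZ}_{1}\leq\norm{X}_{\infty}\norm{Y}_{1}\norm{Z}_{\infty}$; the two routes yield the same bound and are equivalent in content.
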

\begin{proof}
Consider the set of operators $\left\{A_{j}=A\delta_{j,1}, B_{j}=B\delta_{j,2}, C_{j}=C\delta_{j,3}\right\}_{j=1,2,3}$ and apply the subadditivity property \eqref{subAdd}: 
\begin{eqnarray}
\mathcal{F}(A,B,C)&\leq&\mathcal{F}(A,0,0)+\mathcal{F}(0,B,0)+\mathcal{F}(0,0,C)\nonumber\\
&=&\operatorname{Tr}[A]+\norm{B}_{1}+\norm{C}_{1}.\label{subAddApp}
\end{eqnarray}
The latter inequality can be saturated under the hypotheses of this lemma, by taking $Q=\mathbf{1}_{A}$.\end{proof}

\begin{lemma}\label{aNeg}
Let us suppose that $A$ is negative semidefinite, $C$ has support inside the support of $A$ and that $B$ and $A$ have orthogonal supports. Then $\mathcal{F}(A,B,C)= \norm{B}_{1}+\norm{C}_{1}$.
\end{lemma}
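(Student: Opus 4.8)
The plan is to mirror the proof of Lemma~\ref{aPos}, with the roles of the first term $\operatorname{Tr}[QA]$ and the vanishing of the positive part reversed. First I would apply the subadditivity property~\eqref{subAdd} to the split $\left\{A_{j}=A\delta_{j,1},B_{j}=B\delta_{j,2},C_{j}=C\delta_{j,3}\right\}_{j=1,2,3}$, obtaining
\begin{eqnarray}
\mathcal{F}(A,B,C)&\leq&\mathcal{F}(A,0,0)+\mathcal{F}(0,B,0)+\mathcal{F}(0,0,C).
\end{eqnarray}
Each term on the right is elementary: $\mathcal{F}(A,0,0)=\max_{\mathbf{1}\geq Q\geq0}\operatorname{Tr}[QA]=\operatorname{Tr}[A_{+}]=0$ because $A$ is negative semidefinite; $\mathcal{F}(0,B,0)=\norm{B}_{1}$, attained at $Q=\mathbf{1}$ and bounded above by $\norm{B}_{1}$ since $\norm{\sqrt{Q}}_{\infty}\leq1$; and symmetrically $\mathcal{F}(0,0,C)=\norm{C}_{1}$, attained at $Q=0$. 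This already gives the upper bound $\mathcal{F}(A,B,C)\leq\norm{B}_{1}+\norm{C}_{1}$.

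It then remains to exhibit a single feasible $Q$ that saturates this bound. I would take $Q=\mathbf{1}_{B}$, the projector on the support of $B$, which is manifestly positive and $\leq\mathbf{1}$. Since $B$ and $A$ have orthogonal supports, $\mathbf{1}_{B}A=0$, so the first term $\operatorname{Tr}[QA]$ vanishes; since $B$ lives on its own support, $\sqrt{Q}B\sqrt{Q}=B$, so the second term equals $\norm{B}_{1}$; and since $\operatorname{supp}(C)\subseteq\operatorname{supp}(A)$, which is orthogonal to $\operatorname{supp}(B)$, we have $(\mathbf{1}-Q)C(\mathbf{1}-Q)=C$, so the third term equals $\norm{C}_{1}$. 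Hence $\mathcal{F}_{Q}(A,B,C)=\norm{B}_{1}+\norm{C}_{1}$ for this choice, matching the upper bound and establishing the lemma.

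There is no substantial obstacle here beyond careful bookkeeping of supports. The only delicate point is verifying that the \emph{same} operator $Q=\mathbf{1}_{B}$ simultaneously kills the now-harmful term $\operatorname{Tr}[QA]$ while leaving both trace-norm terms at their individual maxima; this is exactly what the two hypotheses — $B$ orthogonal to $A$, and $C$ supported inside $A$ — are there to guarantee, and it is the reason the subadditive bound is tight rather than strict.
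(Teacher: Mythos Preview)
Your proof is correct and follows essentially the same approach as the paper: the same subadditive split into $\mathcal{F}(A,0,0)+\mathcal{F}(0,B,0)+\mathcal{F}(0,0,C)$, the same observation that $\mathcal{F}(A,0,0)=0$ because $A\leq0$, and the same saturating choice $Q=\mathbf{1}_{B}$. You simply spell out in more detail why that choice works term by term, which the paper leaves implicit.
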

\begin{proof}
Consider the same set of operators of Lemma~\ref{aPos} and apply again the subadditivity property \eqref{subAddApp}, then use the fact that $A\leq0$: 
\begin{align}
\mathcal{F}(A,B,C)&\leq\mathcal{F}(A,0,0)+\mathcal{F}(0,B,0)+\mathcal{F}(0,0,C)\\
&=\norm{B}_{1}+\norm{C}_{1}.\nonumber
\end{align}
The latter inequality can be saturated under the hypotheses of this lemma, by taking $Q=\mathbf{1}_{B}$. 
\end{proof}
Hence we can prove the first case of Proposition~\ref{cases}: let $A=A_{+}\oplus(- A_{-})$ be the decomposition of $A$ in terms of its positive and negative parts, with $A_{\pm}\geq 0$, and suppose that $B$, $C$ have support respectively inside the support of $A_{+}$, $A_{-}$. Then consider the set of operators $\left\{A_{j}=(-1)^{j+1}A_{(-1)^{j+1}}, B_{j}=B\delta_{j,1}, C_{j}=C\delta_{j,2}\right\}_{j=1,2}$ and apply the subadditivity property \eqref{subAdd}, together with Lemmas~\ref{aPos}, \ref{aNeg}: 
\begin{eqnarray}
\mathcal{F}(A,B,C)&\leq&\mathcal{F}\left(A_{+},B,0\right)+\mathcal{F}\left(-A_{-},0,C\right)\nonumber\\
&=&\operatorname{Tr}\left[A_{+}\right]+\norm{B}_{1}+\norm{C}_{1},\label{spec}
\end{eqnarray}
which is saturated by a measurement operator $Q=\mathbf{1}_{A_{+}}$. 
This expression is equivalent to that given in \eqref{value} under the current hypotheses, indeed in this case it holds 
\begin{eqnarray}(A+|B|-|C|)_{+}&=&\left((A_{+}+|B|)\oplus(-A_{-}-|C|)\right)_{+}\nonumber\\
&=&A_{+}+|B|,
\end{eqnarray}
so that \eqref{spec} becomes
\begin{eqnarray} 
\mathcal{F}\left(A,B,C\right)&=&\operatorname{Tr}\left[A_{+}+|B|\right]+\norm{C}_{1}\nonumber\\
&=&\operatorname{Tr}\left[(A_{+}+|B|-|C|)_{+}\right]+\norm{C}_{1}.
\end{eqnarray} 
As for the second and third cases of Proposition~\ref{cases}, let us first note that 
\begin{eqnarray}\label{NQB}
\norm{\sqrt{Q}B\sqrt{Q}}_1 &\le& \norm{\sqrt{Q}B_+\sqrt{Q}}_1 + \norm{\sqrt{Q}B_-\sqrt{Q}}_1\nonumber\\
&=& \operatorname{Tr}[Q(B_+ + B_-)] = \operatorname{Tr}[Q|B|]\;,
\end{eqnarray}
where $B_{\pm}$ are the positive and negative parts of $B$ as defined above for $A$, and analogously
\begin{equation} \label{NQC}
\norm{\sqrt{\mathbf{1}-Q}C\sqrt{\mathbf{1}-Q}}_1 \le \operatorname{Tr}[(\mathbf{1}-Q)|C|]\;.
\end{equation}
We then have
\begin{eqnarray}\label{FQ2}
\mathcal{F}_{Q}(A,B,C) &\le& \operatorname{Tr}[Q(A+|B|-|C|)] + \norm{C}_1\label{F1}\\
&\le& \operatorname{Tr}[(A+|B|-|C|)_+] + \norm{C}_1\;.\label{F2}
\end{eqnarray}
The inequality~\eqref{F2}, is saturated by taking $Q$ equal to the projector onto the support of $(A+|B|-|C|)_+$. The inequalities~(\ref{NQB},\ref{NQC}) and hence~\eqref{F1} are saturated in both the second and third cases of Proposition~\ref{cases}, though for different reasons:
\begin{itemize}
\item If $B$ and $C$ have a definite sign, then it holds $B=B_{+}$ or $B=B_{-}$, so that Eq.~\eqref{NQB} is saturated and analogously \eqref{NQC};
\item If $A$, $B$, $C$ all commute with each other, then Eqs.~(\ref{NQB},\ref{NQC}) are saturated by any operator $Q$ which commutes with both $B$ and $C$. Eventually the choice $Q=\mathbf{1}_{(A+|B|-|C|)_{+}}$ necessary to saturate Eq.~\eqref{F2} satisfies this latter condition in the case considered.
\end{itemize}
Finally we note that the previous case of commuting operators, as well as further results, can also be derived by applying the symmetry property of the optimal success probabilities (\ref{prob4}, \ref{prob3}) to obtain recursive formulas, as discussed after Remark~\ref{recRem}, but still a full solution cannot be found in this way.

\section{Computation of $\mathcal{F}_{Q}$ in the qubit case}\label{appC}
In this Section we derive the results (\ref{fQub}, \ref{defSign}) explicitly. 
As a preliminary recall that, for any three vectors $\vec{a}, \vec{b}, \vec{c}\in\mathbb{R}^{3}$ and the Pauli matrices $\vec{\sigma}$ it holds: 
\begin{eqnarray}\label{pauliab}
&&\left(\vec{a}\cdot\vec{\sigma}\right)\left(\vec{b}\cdot\vec{\sigma}\right)=\left(\vec{a}\cdot\vec{b}\right)+i \left(\vec{a}\times\vec{b}\right)\cdot\vec{\sigma},\label{vec1}\\
&&\left(\vec{a}\times\left(\vec{b}\times\vec{c}\right)\right)=\left(\vec{a}\cdot\vec{c}\right)\vec{b}-\left(\vec{a}\cdot\vec{b}\right)\vec{c}.\label{vec2}
\end{eqnarray}
Moreover, given a positive operator $Q$ on $\mathcal{H}_{2}$, the coefficients $c_{\sqrt{Q}}$, $\vec{r}_{\sqrt{Q}}$ of its square root $\sqrt{Q}$ can be expressed in terms of its coefficients $c_{Q}$, $\vec{r}_{Q}$ as:
\begin{eqnarray}
\begin{cases}
c_{Q}=\left(c_{\sqrt{Q}}\right)^{2}+\left(r_{\sqrt{Q}}\right)^{2}\\
r_{Q}=2c_{\sqrt{Q}}r_{\sqrt{Q}}
\end{cases}\hspace{-18pt}\leftrightarrow
\begin{cases}
c_{\sqrt{Q}}=\frac{\sqrt{c_{Q}+r_{Q}}+\sqrt{c_{Q}-r_{Q}}}{2}\\
r_{\sqrt{Q}}=\frac{\sqrt{c_{Q}+r_{Q}}-\sqrt{c_{Q}-r_{Q}}}{2}.
\end{cases}\hspace{10pt}
\label{sqrtQ}
\end{eqnarray}
with $\vec{r}_{Q}\parallel\vec{r}_{\sqrt{Q}}$.
In order to evaluate $\mathcal{F}_{Q}(A,B,C)$ we can compute its first two terms, while the third one is similar to the second one. Let us start with the product $QA$: it is a generic operator with coefficients 
\begin{eqnarray}
c_{QA}&=&c_{Q}c_{A}+\vec{r}_{Q}\cdot\vec{r}_{A}\label{cqa}\\
\vec{r}_{QA}&=&c_{Q}\vec{r}_{A}+c_{A}\vec{r}_{Q}+i\left(\vec{r}_{Q}\times\vec{r}_{A}\right),\label{rqa}
\end{eqnarray}
computed by applying Eq.~\eqref{vec2}. Thus the first term of $\mathcal{F}_{Q}$ is simply $\operatorname{Tr}[QA]=2c_{QA}$. \\
As for the product $\sqrt{Q}B\sqrt{Q}$, its first coefficient is simple: $c_{\sqrt{Q}B\sqrt{Q}}=\operatorname{Tr}[\sqrt{Q}B\sqrt{Q}]/2=c_{QB}$, easily obtained by relabelling Eq.~\eqref{cqa}. The vector of coefficients instead is
\begin{eqnarray}\label{rSqBSq}
&&\vec{r}_{\sqrt{Q}B\sqrt{Q}}=c_{\sqrt{Q}B}\vec{r}_{\sqrt{Q}}+c_{\sqrt{Q}}\vec{r}_{\sqrt{Q}B}+i\left(\vec{r}_{\sqrt{Q}B}\times\vec{r}_{\sqrt{Q}}\right)\hspace{20pt}\\
&&=c_{B}\vec{r}_{Q}+2\left(\vec{r}_{\sqrt{Q}}\cdot\vec{r}_{B}\right)\vec{r}_{\sqrt{Q}}+\left(\left(c_{\sqrt{Q}}\right)^{2}-\left(r_{\sqrt{Q}}\right)^{2}\right)\vec{r}_{B},\nonumber
\end{eqnarray}
where we have first computed the product between $\sqrt{Q}B$ and $\sqrt{Q}$, then substituted the expression for the former by relabelling once again the product $QA$ and employed \eqref{sqrtQ}.\begin{widetext} We are interested in the absolute value of $\sqrt{Q}B\sqrt{Q}$, i.e., the sum of the absolute value of its eigenvalues $\lambda^{(\pm)}_{\sqrt{Q}B\sqrt{Q}}=c_{\sqrt{Q}B\sqrt{Q}}\pm r_{\sqrt{Q}B\sqrt{Q}}$. Hence the only dependence of the final expression on \eqref{rSqBSq} is through its norm:

\begin{eqnarray}
\left(r_{\sqrt{Q}B\sqrt{Q}}\right)^{2}&=&\left(c_{B}r_{Q}\right)^{2}+4(\vec{r}_{\sqrt{Q}}\cdot\vec{r}_{B})^{2}\left(c_{\sqrt{Q}}\right)^{2}+\left(\left(c_{\sqrt{Q}}\right)^{2}-\left(r_{\sqrt{Q}}\right)^{2}\right)^{2}\left(r_{B}\right)^{2}+2c_{B}\left(\left(c_{\sqrt{Q}}\right)^{2}+\left(r_{\sqrt{Q}}\right)^{2}\right)\left(\vec{r}_{Q}\cdot\vec{r}_{B}\right)\nonumber\\
&=&\left(c_{Q}c_{B}+\vec{r}_{Q}\cdot\vec{r}_{B}\right)^{2}+\left(\left(r_{B}\right)^{2}-\left(c_{B}\right)^{2}\right)\left(\left(c_{Q}\right)^{2}-\left(r_{Q}\right)^{2}\right),\label{rMod}
\end{eqnarray}
\end{widetext}
which we have simplified by employing the relations \eqref{sqrtQ}. Eventually we have to distinguish between two cases:
 \begin{itemize}
 \item If both $B$ and $C$ have definite sign then they can always be taken to be positive semidefinite, up to a relabeling $0\leftrightarrow 1$ of the second bits $k_{2}$ of the original states. Then we have 
 \begin{eqnarray}
 &&\norm{\sqrt{Q}B\sqrt{Q}}_{1}=\lambda^{(+)}_{\sqrt{Q}B\sqrt{Q}}+\lambda^{(-)}_{\sqrt{Q}B\sqrt{Q}}=2c_{\sqrt{Q}B\sqrt{Q}},\nonumber\\
 &&\norm{\sqrt{\mathbf{1}-Q}C\sqrt{\mathbf{1}-Q}}_{1}=2c_{\sqrt{\mathbf{1}-Q}C\sqrt{\mathbf{1}-Q}};
  \end{eqnarray}
 \item If instead $B$ and $C$ do not have a definite sign, then it must hold $\lambda^{(+)}_{\sqrt{Q}B\sqrt{Q}}\geq 0$ and $\lambda^{(-)}_{\sqrt{Q}B\sqrt{Q}}\leq 0$ and similar relations for $C$, so that 
  \begin{eqnarray}
 &&\norm{\sqrt{Q}B\sqrt{Q}}_{1}=\lambda^{(+)}_{\sqrt{Q}B\sqrt{Q}}-\lambda^{(-)}_{\sqrt{Q}B\sqrt{Q}}=2r_{\sqrt{Q}B\sqrt{Q}},\nonumber\\
 &&\norm{\sqrt{\mathbf{1}-Q}C\sqrt{\mathbf{1}-Q}}_{1}=2r_{\sqrt{\mathbf{1}-Q}C\sqrt{\mathbf{1}-Q}}.
  \end{eqnarray}
 \end{itemize}
 Note that the third term $\norm{\sqrt{\mathbf{1}-Q}C\sqrt{\mathbf{1}-Q}}_{1}$ can be expressed in terms of the coefficients of $Q$ by observing that $c_{\mathbf{1}-Q}=(1-c_{Q})$ and $\vec{r}_{\mathbf{1}-Q}=-\vec{r}_{Q}$.\\
We can conclude that for $B$ and $C$ of non-definite sign
\begin{eqnarray}
\mathcal{F}_{Q}(A,B,C)=2(c_{QA}+r_{\sqrt{Q}B\sqrt{Q}}+r_{\sqrt{\mathbf{1}-Q}C\sqrt{\mathbf{1}-Q}}),\hspace{25pt}
\end{eqnarray} 
while for $B$ and $C$ of definite sign
\begin{eqnarray}
\mathcal{F}_{Q}(A,B,C)=2\left(c_{QA}+c_{QB}+c_{QC}\right),
\end{eqnarray}
which give respectively Eqs.~(\ref{fQub}, \ref{defSign}) after inserting the values of the coefficients computed above, i.e., Eqs.~(\ref{cqa}, \ref{rMod}).

\end{document}